\newcounter{parentnumber}
\theoremstyle{plain}
\newtheorem {theorem}{Theorem}
\newtheorem{assumption}{Assumption}
\newtheorem {definition}{Definition}
\newtheorem {observation}{Observation}
\newtheorem {remark}{Remark}
\newtheorem {lemma}{Lemma}
\theoremstyle{definition}
\newtheorem {example}{Example}
\newtheorem{proposition}[theorem]{Proposition}
\newcommand{\R}{\mathbb{R}}
\title{Pricing and Electric Vehicle Charging Equilibria}
\author{
    Trivikram Dokka \footnote{Advanced Analytics Group, Air Products Plc, United Kingdom. Email:
    \href{mailto:trivikram.dokka@yahoo.co.uk}{trivikram.dokka@yahoo.co.uk}}
    \ \and
     Jorge Bruno\footnote{Department of Digital Technologies, Faculty of Business and Digital Technologies,
University of Winchester. Email:\href{mailto:Jorge.Bruno@winchester.ac.uk}{Jorge.Bruno@winchester.ac.uk}}\ \and
    Sonali SenGupta\footnote{ Economics Section, Queens Management School,
    Queens University Belfast, United Kingdom. Email:
    \href{mailto:s.sengupta@qub.ac.uk}{s.sengupta@qub.ac.uk}} \ \and 
    Chowdhury Mohammad Sakib Anwar\footnote{\textit{Author for correspondences.} Department of Economics, University of Winchester, United Kingdom. Email:
    \href{mailto:Sakib.Anwar@winchester.ac.uk}{Sakib.Anwar@winchester.ac.uk}.}
}
\date{\normalsize \today}
\begin{document}

\maketitle
\begin{abstract}
We study equilibria in an Electric Vehicle (EV) charging game, a cost minimization game inherent to decentralized charging control strategy for EV power demand management. In our model, each user optimizes its total cost which is sum of direct power cost and the indirect dissatisfaction cost. We show that taking player specific price independent dissatisfaction cost in to account, contrary to popular belief, herding only happens at lower EV uptake. Moreover, this is true for both linear and logistic dissatisfaction functions. We study the question of existence of price profiles to induce a desired equilibrium. We define two types of equilibria, distributed and non-distributed equilibria, and show that under logistic dissatisfaction, only non-distributed equilibria are possible by feasibly setting prices. In linear case, both type of equilibria are possible but price discrimination is necessary to induce distributed equilibria. Finally, we show that in the case of symmetric EV users, mediation cannot improve upon Nash equilibria.

\medskip
\begin{flushleft}\textbf{Keywords} : Electric Vehicles, Pricing, Nash equilibrium, Coarse correlated equilibrium, Mediation, Herding, Dissatisfaction cost

\end{flushleft}\par
\begin{flushleft}\textbf{JEL Codes}: C61, C72, D4, D11, D82 \ \end{flushleft}\par
\end{abstract}

\newpage
\section{Motivation and research questions}\label{intro_sec}

Electric Vehicles (EVs) are widely seen as part of a solution to economically and environmentally sustainable transportation future. With more countries looking to de-carbonize their economies at an increased pace, more incentives to EV uptake are being proposed and implemented. However, mass scale EV uptake comes with its own challenges, primary of them being impact on existing electricity infrastructure. Several researchers investigate economic and environmental implications of residential charging of electric vehicles \parencite{clement2009impact,muratori2018impact}.  It is widely believed that an efficient demand response is essential to avoid costly infrastructure upgrades and/or blackouts. This involves alignment of EV charging demand with supply. Such an alignment not only avoids costly and unnecessary capacity addition but also results in shift to renewable sources. Initial ideas to achieve this alignment, naturally, were based on incentivizing people via prices (to charge at non-peak times)\parencite{palensky2011demand,electricnation}. It is argued that the same price signals received by all EV users will result in herding behavior where, all users shift their charging to low cost periods to avoid peak periods, resulting in new peaks \parencite{valogianni2020sustainable,electricnation,electricnation2019finalreport}. Herding behavior formation argued in these studies relies directly on the assumption that EV users are cost minimizers. Ironically, the observations of herding are made without any reference to the level of EV uptake. In a typical herding scenario (with high uptake), a EV user receives less power (kWh) than is expected when charged at full speed, due to congestion or as part of a strategy to move users to different time of day, hence causing dissatisfaction due to less battery charge received. Without taking this dissatisfaction in to account the conclusion that charging behaviors result in herding may not be consistent with expected behavior over time. The motivation for capturing dissatisfaction explicitly is justified because price as an instrument to control charging behavior is only possible when users are (or are not) willing to pay to avoid dissatisfaction. In fact, more recently user dissatisfaction is explicitly modeled within an algorithmic charging decision-making set-up \parencite{lin2021minimizing}. Similarly, \textcite{wu2022smart} uses the term inconvenience cost in the same sense and illustrate optimal mechanisms for EV charging at public stations.  Consistent with this, our first research question is:
\paragraph{Question 1} \quad \textit{When users experience indirect costs associated with dissatisfaction will herding still happen?}

Grid managers and DNOs would want to use the flexibility of EV charging (believed to be flexible load compared to other loads such as household power demand) to achieve a desired consumption profile which better aligns with grid management objectives (see \textcite{valogianni2020sustainable} and references therein). Recent research suggests designing a dynamic and adaptive pricing schemes to achieve a desired charging profile  \parencite{jacobsen2022consumers}. Our goal differs from these works in that we seek to find if price profiles exist which will lead to a desired behavior profile in equilibrium, and if so under what conditions, taking into account congestion \textit{aka} dissatisfaction. To the best of our knowledge we are not aware of any work that takes into account dissatisfaction cost or analyzes equilibrium outcomes. With this motivation, we address our second research question:
\paragraph{Question 2} \quad \textit{Does there exist price profiles that will induce a desired charging profile under the given player-specific price independent dissatisfaction costs?}

To answer our questions we take a game-theoretic approach to model EV user's selfish behavior and use a stylized model that captures the key aspects of EV charging behavior as an EV charging game. Centralized versus decentralized control of charging has received much attention in EV related literature. Decentralized setting can be seen through the game theoretic lens, an approach only taken by relatively few researchers compared to much more abundant empirical studies. While earlier studies, such as \textcite{tushar2012economics}, considered a stackelberg approach, closer to our setting, simultaneous form games were studied in \textcite{chakraborty2014demand,chakraborty2017distributed}. However, no studies considered dissatisfaction cost within game-theoretic setting. Our work also complements the alternative stream of literature that takes a mechanism design approach to pricing problem, such as \textcite{wu2022smart,nejad2017online}.

Price-based or otherwise, the idea of (decentralized) controlled charging relies on the assumption that EV users will find themselves better off when an agency such as EV aggregator acts as recommender system; under the belief that such an entity has greater (technological and informational) ability to make better charging recommendations compared to EV users deciding on their own.\footnote{A number of researchers proposed optimization algorithms under decentralized scenario, see \textcite{shen2019optimization}.} The role of mediating agency is certainly not unique to this situation, and many economic situations, whether it be resource sharing or contributing towards a public good, also have this characteristic. Theoretically, such an entity will recommend (or implement), with user's consent, a charging regime which may or may not satisfy user's complete demand but may result in lower cost. But, what if users do not find following agency recommendations better than their own decisions? Therefore, a confirmation of existence of such an entity is necessary via game theoretic analysis. This leads to our final question.
\paragraph{Question 3} \quad \textit{Will co-ordination (or co-ordinated mediation) help? In other words, if an agency (aggregator/charging manager) acts as a recommender system of how to charge, will EV users commit to such an agency, and if they do, does it lead to a different equilibrium than if they do not?} %In fact, this question hinges on the very popular premise of controlled/smart charging of vehicles.

It is well known that in non-cooperative settings mediated communication is an efficient way to achieve incentive-compatible outcomes via correlation devices \textit{aka} correlated equilibrium \parencite{aumann1987correlated} and Coarse Correlated Equilibrium (CCE)\parencite{moulin1978strategically}. We adopt CCE to answer if mediated communication leads to different outcomes as against when EV users behave on their own. CCE, in recent years has received considerable attention owing to the finding that no-regret play leads to coarse-correlated equilibria \parencite{roughgarden2015intrinsic}. Correlated equilibria have also been associated with evolutionary learning \parencite{arifovic2019learning}.

%Above questions and our work directly underlies the basis for energy related operations literature... . 

From the structure of games point of view, the games that we study in our work could be seen to be connected to congestion and budget games, hence a comment on connection to this literature is in order. The extant literature on congestion games spanning areas of economics, computer science and operations research fields, mainly focus on existence and efficiency of equilibria, that too, predominantly (pure) Nash equilibria. For example, these results commonly establish bounds on price of anarchy and stability.  On the other hand, our questions are not related to efficiency of equilibria, instead, our questions are motivated by practical observations from EV field trials. In the context of our games, a desirable outcome may not even be the efficient one as is usually defined. It is conceivable that games in our work could be modeled via congestion games frameworks, furthermore, efficiency questions may also be relevant (as discussed in \textcite{chakraborty2014demand}).  However, this is not the main focus of our work and we leave it for future study. 

The rest of the paper is organized as follows. In Section \ref{model} we present a discrete EV charging game along with the main assumptions. In Section \ref{results}, we outline our three main results that answer the questions stated in Section \ref{intro_sec}. In Sections \ref{herding_sec}, \ref{pricing_sec}, and \ref{cce_sec} we present the details including statements and proofs of the results underlying research questions 1, 2 and 3 respectively. %We conclude by stating some ideas for future research in Section \ref{further_research}.

\section{EV charging game: model and assumptions}\label{model}

Consider the following game we call \textit{EV charging game (EVCG)}.  There are $n$ players. A typical Player $i$ has a demand $r_id_i$ which can be fulfilled by choosing to charge in any of $d_i \leq T$ time periods. That is, strategy of a player is a $T$-dimensional binary vector.

Given a strategy profile $(s_i, s_{-i})$, the dis-utility/cost of Player $i$ is 

\begin{equation}
    c(s_i,s_{-i}) = \sum_{t} b_t g_t(s_i, s_{-i}) + \sum_{t}  f_t(s_i, s_{-i}),
\end{equation}

where 
\begin{comment}
\begin{align*}
    g_t(s_i, s_{-i}) &= \frac{P_ts^t_i}{\sum_{k=1}^n r_ks^t_k}, \quad \mbox{if $ \sum_{k=1}^n r_ks^t_k>P_t$},\\
    &= r_is^t_i \quad \mbox{otherwise};
\end{align*}
\begin{align*}
    f_t(s_i, s_{-i}) &= h\left (s^t_i\left(\frac{P_t}{\sum_{k=1}^n r_ks^t_k}\right)\right), \quad \mbox{if $\sum_{k=1}^n r_ks^t_k>P_t$},\\
    &= 0 \quad \mbox{otherwise},
\end{align*}
\end{comment}

%\textcolor{red}{DELETE the piece-wise version if you don't like it}
\[g_t(s_i, s_{-i})=
\begin{cases}
  \frac{P_ts^t_i}{\sum_{k=1}^n r_ks^t_k}, & \quad \mbox{if $ \sum_{k=1}^n r_ks^t_k>P_t$}\\
     r_is^t_i, & \quad \mbox{otherwise}
\end{cases}
\]

\[ f_t(s_i, s_{-i})=
\begin{cases}
  h\left (s^t_i\left(\frac{P_t}{\sum_{k=1}^n r_ks^t_k}\right)\right), &\quad \mbox{if $\sum_{k=1}^n r_ks^t_k>P_t$}\\
     0, & \quad \mbox{otherwise}
\end{cases}
\]
%(\textcolor{red}{note that we now use received charge instead of shortfall as argument to $h90$.})
with $r$ and $P$ being parameters of the game which are explained as follows: $r_i$ is the power rating (in $kW$) of Player $i$ which informs power transfer rate; $P_t$ is the total available power (in $kWh$) in time period $t$; $s_i^t$ indicates whether Player $i$ charges at time period $t$ or not; $\sum_{k=1}^n s^t_k$ is the total number of users who decided to charge in time period $t$; and $h$ represents a dissatisfaction function. Furthermore, each time period is classified according to $k$ price slabs. In the most general case, $k=T$. For this reason, typically, charging choices are modeled as discrete to allow for modeling a complete price discrimination between users, where users may pay time of day tariffs. However, consumers rarely choose schemes which employ real-time pricing or several price slabs during the day, this is usually explained as fear about price volatility. Most common practice is to differentiate between peak and non-peak times (see \textcite{electricnation,jacobsen2022consumers}). In our model, $b_t$ is the price per unit in time $t$, we will assume a two part price plan as formalized in the assumption below.
\begin{assumption}
Two price slabs: peak and non-peak; we superscript peak and non-peak times with $D$ and $N$ respectively, eg., $b^N_i$ is the non-peak price for user $i$.
\end{assumption}

%\paragraph{Assumption 1} \quad Two price slabs: peak and non-peak; we superscript peak and non-peak times with $D$ and $N$ respectively, eg., $b^N_i$ is the non-peak price for user $i$. %(\textcolor{red}{change terminology from day/night to non-peak/peak})

\begin{remark}
Note that the scenario when $g_t(s_i, s_{-i})< r_i$ can be interpreted as congestion or a deliberate delayed charging strategy as in \textcite{wu2022smart} to induce a desirable charging behavior equilibrium.
\end{remark}
\begin{assumption}
$h(\cdot)$ is a continuous and monotone function. 
\end{assumption}
%\paragraph{Assumption 2} \quad $h(\cdot)$ is continuous and monotone function. 

In our analysis we consider two functions: a linear and a logistic one. Linear dissatisfaction is also considered in the recent literature, the reason being linear dissatisfaction rates are more appealing because of simplicity and associated tractability of resulting analysis. In practice, however, it is likely the dissatisfaction is different across the support. For example, increase in dissatisfaction is probably higher when a user gets 10 kWh instead of 11kWh than when they get 20 kWh instead of 21 kWh. In both these situation the dissatisfaction caused in linear case would be same but in practice this is usually not the case. For this reason, we consider logistic dissatisfaction which captures change in dissatisfaction being different at lower and upper quantiles of power distribution. 

\begin{assumption}
$r_i = 1$, $P_t\geq 1$ and $\sum_t s_i^t = d_i$.
\end{assumption}
%\paragraph{Assumption 3} \quad   $r_i = 1$ (a reasonable assumption as all home chargers are 22 KWh); $P_t\geq 1$; $\sum_t s_i^t = d_i$.

The assumption that all charging rates are equal is for sake of simplicity, and since most home chargers and rapid public chargers provide very similar rates of charge. All the results can be adapted when this assumption is relaxed. Furthermore, $\sum_t s_i^t > d_i$ is not considered because charging more time than necessary is captured via dissatisfaction.

%\textcolor{red}{Linear dissatisfaction: is this a potential game. It is not exact potential game but not sure whether it is ordinal potential game.}

To conclude this section, we introduce the notion of distributed equilibria in the following definition.
\begin{definition}
A strategy profile is of \emph{distributed} type when there is at least one user choosing to charge at both peak and non-peak periods and is of \emph{non-distributed} type, otherwise. Similarly, a \emph{distributed equilibrium} is one arising from a strategy profile of distributed type and it becomes a \emph{non-distributed equilibrium}, otherwise.
\end{definition}
%Following this definition we can also associate a strategy to one of two types. A strategy profile is of distributed type when at least one user distributes charging between peak and non-peak time periods, and non-distributed otherwise.

\section{Results}\label{results}
\begin{enumerate}

    \item \textit{Question 1: Herding?} \quad We find that when users selfishly behave to minimize their overall cost, including dissatisfaction cost, herding is unlikely to happen in a more congested scenario. That is, considering a fixed power capacity, when EV uptake goes beyond a certain level users are more likely to converge to a distributed charging scenario. This is true with both linear and logistic dissatisfaction functions, and also true regardless of price discrimination. More specifically, in Theorem \ref{thm1} we show that there is EV uptake cut-off above which (some) users may find it profitable to deviate from charging only in peak time and distribute between peak and non-peak times.
    
    \item \textit{Question 2: Desired outcome inducing prices?} \quad Our results suggest charging behavior strongly depends on the dissatisfaction function. We illustrate that it is not possible to induce any arbitrary type of equilibrium by changing prices, even with complete price discrimination between users. That is, in spite of pricing differently for different players some behaviors may not be achievable. Hence, our work illustrates the limits of price based control. In Theorem \ref{pricing_thm}, we show that the in linear case both distributed and non-distributed equilibria are possible and for distributed profile to be equilibrium there is a unique price profile. However, in logistic case the only equilibria irrespective of prices are of non-distributed type. %We also illustrate a way to set feasible prices. 
    
    \item \textit{Question 3: Aggregator or not?} \quad In many economic situations mediated communication has been shown to lead to better outcomes. However, the opposite is also true to some situations. We show that this is the case in one such situation when all users have same dissatisfaction and hence are charged the same prices - \textit{the symmetric case}. In Theorem \ref{main_cce_result}, we show that users are no better off than behaving selfishly on their own (i.e., Nash equilibria) compared to when a mediating agency recommends them how to charge (i.e., Coarse Correlated equilibria). We show that they coincide in this case. However, this result is only true for symmetric case. 
\end{enumerate}

Following the proofs of our results we discuss the implications of these results in the form of remarks.

\section{Herding}\label{herding_sec}
%\subsection{Deviation from Herding in Symmetric Games}

Let $T^N$ non-peak periods available for charging. Given a strategy $s_i \in \{0,1\}^T$, we allocate the first $ T^N$ entries of $s_i$ to the non-peak charging periods.  Cost function for player $i$ is

\begin{equation}\label{eqn:utility}
 c(s_i,s_{-i}) = \sum_{t=1}^{T^N} \left((s_i)_t x_t b_i^N  +  f_i^N(x_t)\right) + \sum_{t=T^N+1}^{T} \left((s_i)_t x_t b_i^D  +  f_i^D(x_t)\right)
\end{equation}

\noindent
where $f_i^D(x_t)$ and $f_i^N(x_t)$ represent Player $i$'s peak-time and non-peak-time dissatisfaction functions during period $t$, resp., and $x_t=\frac{P_t}{\sum_{j=1}^n (s_j)_t}$. The following result is then straightforward to verify.

\begin{observation}
From the definition of a Nash equilibrium if there is profitable, in this case, lower cost deviation, that is, a strategy profile $(s_1,\ldots, s_n)$ is not in equilibrium if, and only if, for some $i\leq n$ there exists distinct $t',t^*\leq T$ with $(s_i)_{t'}=1$, $(s_i)_{t^*}=0$ and 
\begin{equation}\label{eqn:main}
    x_{t'}b_i^{Q'} + f_i^{Q'}(x_{t'}) > x_{t^*}b_i^{Q^*} + f_i^{Q^*}(x_{t^*})
\end{equation}
where $Q',Q^* \in \{N,D\}$ %(\textcolor{purple}{I think notation improvement is required here - or may be simplification - get rid of $t$ and $t'$ and simply call $x$ a vector - which it is anyway.})
\end{observation}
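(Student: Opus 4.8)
The claim is an "if and only if" characterization of when a strategy profile fails to be a Nash equilibrium, phrased in terms of the cost function (\ref{eqn:utility}). My plan is to unwind the definition of Nash equilibrium directly and reduce a general unilateral deviation to a single "swap" move. First I would recall that $(s_1,\dots,s_n)$ fails to be a Nash equilibrium precisely when some Player $i$ has an alternative strategy $\tilde s_i$ with $c(\tilde s_i, s_{-i}) < c(s_i, s_{-i})$. Both $s_i$ and $\tilde s_i$ are binary $T$-vectors, and by Assumption~3 they satisfy $\sum_t (s_i)_t = \sum_t (\tilde s_i)_t = d_i$, so they select the same number of periods. Hence $\tilde s_i$ is obtained from $s_i$ by a permutation of the chosen slots — equivalently, by a sequence of elementary swaps, each turning one coordinate from $1$ to $0$ and another from $0$ to $1$.

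The key observation driving the reduction is that the cost in (\ref{eqn:utility}) is \emph{separable across time periods}: holding $s_{-i}$ fixed, the congestion term $x_t = P_t/\sum_j (s_j)_t$ in period $t$ does not depend on Player $i$'s choices in other periods (changing $(s_i)_t$ affects only the denominator at $t$ itself, and when $(s_i)_t$ flips between $0$ and $1$ while all other players are fixed, the marginal cost contribution of period $t$ to Player $i$ is exactly $x_t b_i^{Q} + f_i^{Q}(x_t)$ where $Q\in\{N,D\}$ records whether $t$ is non-peak or peak and $x_t$ is evaluated with $(s_i)_t = 1$). Therefore the total cost decomposes as a sum of per-period contributions, and the change in Player $i$'s cost under an elementary swap of $t'$ (currently $1$) for $t^*$ (currently $0$) is exactly
\[
\Big(x_{t^*} b_i^{Q^*} + f_i^{Q^*}(x_{t^*})\Big) - \Big(x_{t'} b_i^{Q'} + f_i^{Q'}(x_{t'})\Big),
\]
which is negative precisely when (\ref{eqn:main}) holds. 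This gives the "$\Leftarrow$" direction immediately: if (\ref{eqn:main}) holds for some $i, t', t^*$ then the corresponding swap is a profitable deviation, so the profile is not an equilibrium. For "$\Rightarrow$", suppose the profile is not an equilibrium, witnessed by some profitable $\tilde s_i$; write $\tilde s_i$ as a composition of elementary swaps as above. Since the total cost change is the sum of the per-swap changes and is strictly negative, at least one swap in the sequence must strictly decrease the cost, and that swap exhibits the required $t', t^*$ with (\ref{eqn:main}).

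The one technical point to handle carefully — and the main (mild) obstacle — is the case distinction in the definitions of $g_t$ and $f_t$: when $\sum_j (s_j)_t \le P_t$ there is no congestion, so $x_t$ is effectively capped at $r_i = 1$ and $f_i^Q(x_t) = 0$. I would note that the statement as written already absorbs this by letting $x_t = P_t/\sum_j (s_j)_t$ and letting the dissatisfaction functions $f_i^Q$ be defined so that $f_i^Q(x_t) = 0$ in the uncongested regime (and, in the power term, the delivered power is $\min\{1, x_t\}$); alternatively one can simply observe that adding or removing Player $i$ at period $t$ changes the regime in a monotone way and the per-period marginal cost is still well-defined, so the swap argument goes through verbatim. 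I would also remark that distinctness of $t'$ and $t^*$ is automatic in a nontrivial swap, and that $Q', Q^*$ are determined by whether $t', t^* \le T^N$. With these remarks the proof is a short direct verification.
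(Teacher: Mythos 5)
Your proposal is correct, and it supplies exactly the direct verification the paper leaves implicit (the paper states the observation as ``straightforward to verify'' and gives no proof): unwinding the definition of Nash equilibrium, using the per-period separability of the cost given fixed $s_{-i}$ and the fixed budget $\sum_t (s_i)_t = d_i$ to reduce any profitable deviation to a profitable single swap via an averaging/pigeonhole step. You also correctly flag the one point the paper glosses over, namely that $x_{t^*}$ in (\ref{eqn:main}) must be read as the congestion level \emph{after} Player $i$ joins period $t^*$ (consistent with how the inequality is applied in the proof of Theorem \ref{thm1}), and that the uncongested regime is absorbed by setting $f_i^Q = 0$ and capping delivered power there.
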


%\subsubsection{Deviation from Nighttime Herding}

Recall that our aim is to establish if herding can be an equilibrium. To this end, for non-peak herding to be possible we require that $d_i\leq T^N$. The case when there is enough non-peak capacity (i.e., $\sum_i d_i \leq \sum_{t<T^N} P_t$) trivially leads to herding as in such a case users do not experience dissatisfaction.  We let $d_i = d_j= d $ for all $i,j \leq n$ and $P_t=P$ for all $t\leq T^N$ leading to a (demand, capacity)-homogeneity. Considering a homogeneous case allow us to focus our analysis on herding versus non-herding by removing all possibilities of several herding equilibria. In fact, one could arrive at the same findings as in Theorem 2 by instead considering period-specific constrained capacities ($P_t$) and heterogeneous demands ($d_i\not = d_j$) and then constructing an equivalent instance of homogeneity where all users incur same cost by adjusting the parameters of their respective dissatisfaction functions. Hence, no generality is lost from assuming (demand, capacity)-homogeneity.

Consider the strategy profile $(s_i,s_{-i}) = S_N$ where all $n$ players play the herding strategy with $T^N = d$ or $\sum_{k=1}^{T^N} (s_i)_k = d$ for all $i\leq n$.  It follows that $x_t=\frac{P}{\sum_{j=1}^n (s_j)_t}$ is the same for all $t\leq d$. Set this value as $x$. 

\begin{remark}
Note that $x$ is an indicator of system congestion, that is, it decreases with increase in the number of users $n$ at a fixed system capacity of $P_t$.
\end{remark}

\begin{theorem}\label{thm1}
There exists $\hat{x}\in (0,1)$ (threshold congestion level) such that for all $x<\hat{x}$ herding cannot be a Nash equilibrium when 
\begin{enumerate}
    \item \textit{Linear dissatisfaction} \quad $f_j^N(x) = \alpha_j - \beta_j x$, and when
    \item \textit{Logistic dissatisfaction} \quad $f_j^N(x) = \alpha_j\left(\frac{1}{1+e^{\beta^N_j\left(2x-1\right)}}\right)$.
\end{enumerate}
Furthermore, in the case of linear dissatisfaction the threshold congestion level is given by $x \leq \frac{\alpha_j - b_j^D}{\beta_j - b_j^N}$.
\end{theorem}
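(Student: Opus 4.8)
The plan is to invoke the Observation and examine only the single-slot swap deviations available from the herding profile $S_N$. In $S_N$ each of the $d=T^N$ non-peak periods is used by all $n$ players and therefore carries congestion $x=P/n$, while every peak period is empty; since player $j$ already occupies all non-peak slots, the only admissible single-swap deviation is to vacate one non-peak slot $t'$ and occupy one peak slot $t^*$. I would first evaluate player $j$'s cost at the two affected slots. At $t'$ before the swap the player receives $x$ units at the non-peak price and, because the herding regime is congested ($n>P$), also incurs dissatisfaction, so the contribution is $x b_j^N+f_j^N(x)$. At $t^*$ after the swap the player charges alone, and since $\sum_k r_k s^{t^*}_k=1\le P_t$ the game prescribes full throughput and zero dissatisfaction, so the contribution is just $b_j^D$. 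This last point is the crux: an uncongested peak slot carries no dissatisfaction regardless of the functional form of $h$. As player $j$'s remaining slots are unaffected, the swap strictly lowers $j$'s cost precisely when $x b_j^N+f_j^N(x)>b_j^D$, which is exactly inequality~(\ref{eqn:main}) with $(x_{t'},Q')=(x,N)$ and $(x_{t^*},Q^*)=(1,D)$, $f_j^D(1)=0$.

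The second step is to solve $x b_j^N+f_j^N(x)>b_j^D$ for $x$ in each of the two models. In the linear case $f_j^N(x)=\alpha_j-\beta_j x$, so the inequality becomes $\alpha_j-b_j^D>x(\beta_j-b_j^N)$; assuming the dissatisfaction slope dominates the non-peak price ($\beta_j>b_j^N$), this is equivalent to $x<\frac{\alpha_j-b_j^D}{\beta_j-b_j^N}$, the per-player cutoff stated in the theorem. In the logistic case put $\psi_j(x)=x b_j^N+\alpha_j\bigl(1+e^{\beta_j^N(2x-1)}\bigr)^{-1}-b_j^D$, a continuous function on $[0,1]$ with $\psi_j(0)=\alpha_j\bigl(1+e^{-\beta_j^N}\bigr)^{-1}-b_j^D$ and $\psi_j(1)=b_j^N+\alpha_j\bigl(1+e^{\beta_j^N}\bigr)^{-1}-b_j^D$. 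Under the natural parameter assumptions that the maximal dissatisfaction exceeds the peak price (so $\psi_j(0)>0$) and that $b_j^D>b_j^N$ with a sufficiently steep logistic (so $\psi_j(1)<0$), the intermediate value theorem produces a smallest zero $\xi_j\in(0,1)$ of $\psi_j$, and $\psi_j>0$ on $[0,\xi_j)$.

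The final step assembles the threshold. Put $\hat x=\max_j \xi_j$, where $\xi_j=\frac{\alpha_j-b_j^D}{\beta_j-b_j^N}$ in the linear case and $\xi_j$ is the smallest zero of $\psi_j$ in the logistic case; the stated parameter assumptions give $\hat x\in(0,1)$. For any $x<\hat x$ the player $j^{*}$ attaining the maximum satisfies $x<\xi_{j^{*}}$, hence has a strictly profitable swap of a non-peak slot for a peak slot, so by the Observation $S_N$ is not a Nash equilibrium. The linear-case expression for $\xi_j$ is the ``Furthermore'' clause.

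The step I expect to be the real obstacle is the accounting at the deviation slot: one must argue cleanly that the lone charger in a peak period loses no throughput and incurs no dissatisfaction -- this is precisely what makes deviating attractive and drives the ``herding collapses under congestion'' conclusion -- and, in the logistic case, one must confirm that the set of $x$ for which some player deviates is genuinely an interval $(0,\hat x)$ with $\hat x<1$. Since $\psi_j$ is a sum of an increasing linear term and a decreasing logistic term it need not be monotone, so this rests on the boundary-sign assumptions above rather than on a monotonicity argument; everything else is direct substitution into~(\ref{eqn:main}).
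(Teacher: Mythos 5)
Your proposal is correct and follows essentially the same route as the paper: you reduce the question to the single deviation inequality $x\,b_j^N + f_j^N(x) > b_j^D$ (the paper's Equation~(\ref{eqn:ineq}), obtained by noting a lone peak charger gets full power and zero dissatisfaction), solve it directly in the linear case to get $x < \frac{\alpha_j - b_j^D}{\beta_j - b_j^N}$, and use continuity/IVT in the logistic case. The only cosmetic differences are that you posit $\beta_j > b_j^N$ and the boundary-sign conditions as explicit assumptions (the paper derives them from $f_j^N(1)=0$, $\alpha_j > b_j^D$, $b_j^D > b_j^N$) and you take the smallest zero rather than proving uniqueness of the intercept via the inflection-point argument, neither of which changes the substance.
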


\begin{proof}

Suppose Player $j$ considers deviating from the standard herding strategy to a different strategy $s'_j$ and put $D_j = \sum_{t=d+1}^T s_j' > 0$. In which case, applying Equation~\ref{eqn:main}, we have that Player $j$ would deviate provided that 

\begin{equation}\label{eqn:ineq}
f_j^N(x) > b_j^D-b_j^Nx
\end{equation}
where we let $f_j^D(x_t) = 0$ for all $t>d$ as there is enough power available during any peak-time period since that Player $j$ would be the only player charging during a peak-time period. It is interesting to notice that the above inequality makes no reference to $D_j$. Thus, the utility difference by switching from herding is the same for any deviation.  

\textit{Linear case} \quad  Consider Player $j$'s linear non-peak-time dissatisfaction function $f_j^N: \R \to [0,1]$ for which $f_j^N(x) = \alpha_j - \beta_j x$ when $x<1$ and $0$, otherwise. From Equation \ref{eqn:ineq} we obtain that Player $j$ would deviate from herding provided that $\alpha_j - \beta_j x \geq b_j^D-b_j^Nx$. Since  $b_j^D > b_j^N$ then $f_j^N(1) = 0 < b_j^D-b_j^N$. It follows that for the cases where $\alpha_j < b_j^D$ there is no deviation from herding since both $f_j^N(0) <b_j^D$ and $f_j^N(1) <b_j^D-b_j^N$.\\

%(\textcolor{purple}{I think in general we should have $\beta_N>\beta_D$ - this simply says that for same level of power received dissatisfaction experienced at non-peak is higher compared to peak, this is assuming charging at non-peak is most preferred.}) I agree - Jorge 29 Sept. \\

Alternatively, if $b_j^D < \alpha_j$ then $\beta_j>b_j^N$ since $\alpha_j - \beta_j = f_j^N(1) = 0 < b_j^D-b_j^N$. In turn, deviation from herding occurs for all values of 
\[
x \leq \frac{\alpha_j - b_j^D}{\beta_j - b_j^N}.
\]
This threshold point is illustrated in Figure \ref{fig:1}.

\begin{figure}[h]
    \centering
    \includegraphics[scale=0.95]{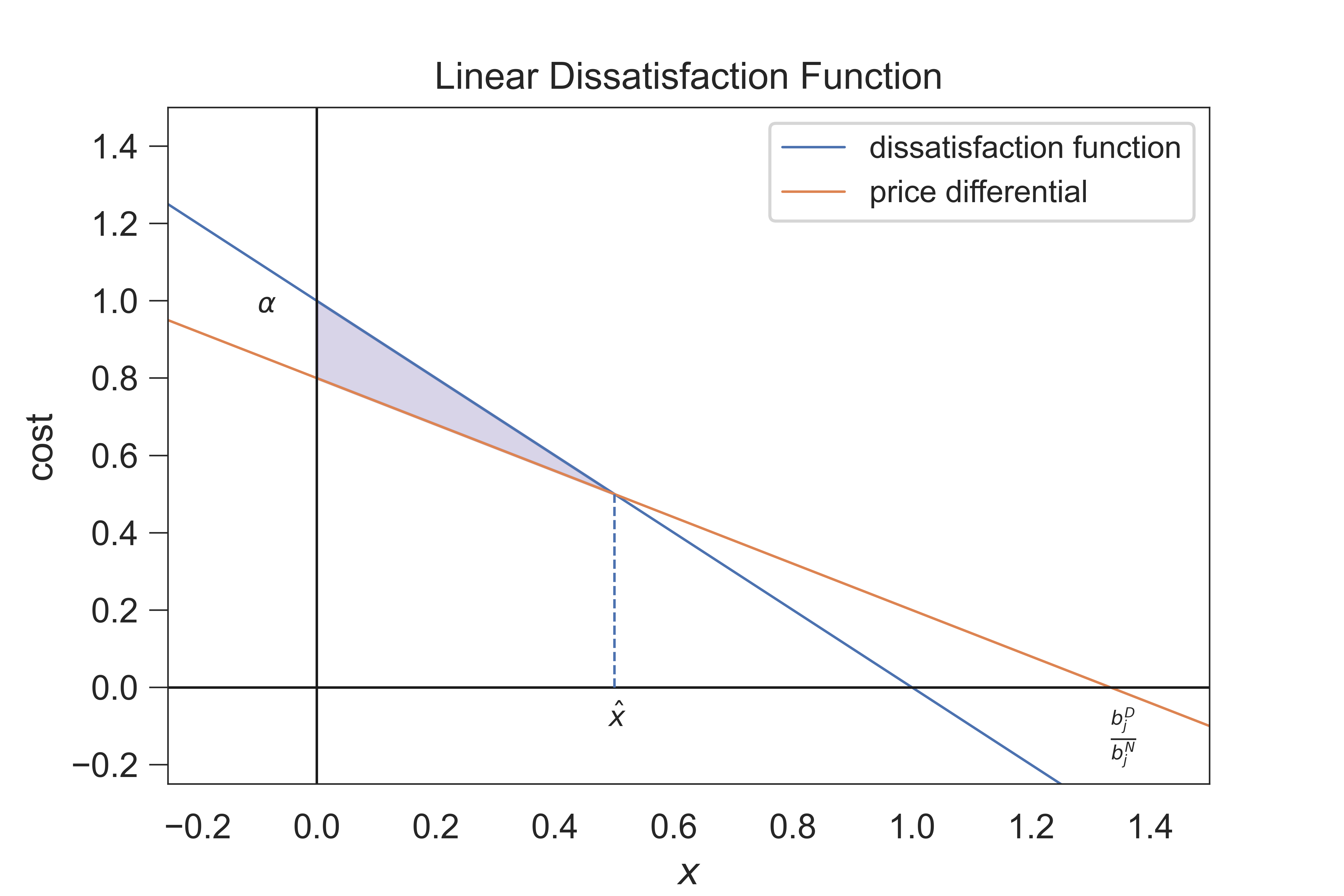}
    \caption{Switching to Non-Herding}
    \label{fig:1}
\end{figure}

\textit{Logistic dissatisfaction} \quad A more realistic dissatisfaction function can be expressed in terms of the logistic function as
\[
f_j^N(x) = \alpha^N_j\left(\frac{1}{1+e^{\beta^N_j\left(2x-1\right)}}\right).
\]
Observe that $f_j^N(0) = \alpha_j$ and $f_j^N(1) \approx 0$. As above, we have that $f_j^N(1) \approx 0 < b_j^D-b_j^N$ and since $b_j^D < \alpha_j$ then there exists a unique $\hat{x} \in (0,1)$ with 
\[
\alpha^N_j\left(\frac{1}{1+e^{\beta^N_j\left(2\hat{x}-1\right)}}\right) = b_j^D-b_j^N\hat{x}.
\]
By continuity of both $f_j^N(x)$ and $b_j^D-b_j^Nx$ it follows that for all $x \leq \hat{x}$ deviation from herding is preferred. Indeed, consider the unique inflection point $\overline{x} \in (0,1)$ of $f_j^N(x)$: $f_j^N(x)$ is concave down for all $x \in (0,\overline{x})$ and concave up for all $x \in (\overline{x}, 1)$. Considering the scenarios with $\overline{x} \leq \hat{x}$ and $\overline{x} > \hat{x}$ separately, it follows that $\hat{x}$ is the only intercept of $f_j^N(x)$ and $b_j^D-b_j^N x$.

This is illustrated in Figure \ref{fig:2}, where the shaded region illustrates the case where users deviate from herding behavior. Note that $b_j^D>\alpha_j$ is not realistic as in such a case trivially no user will charge in peak time.
\begin{figure}[h]
    \centering
    \includegraphics[scale=0.95]{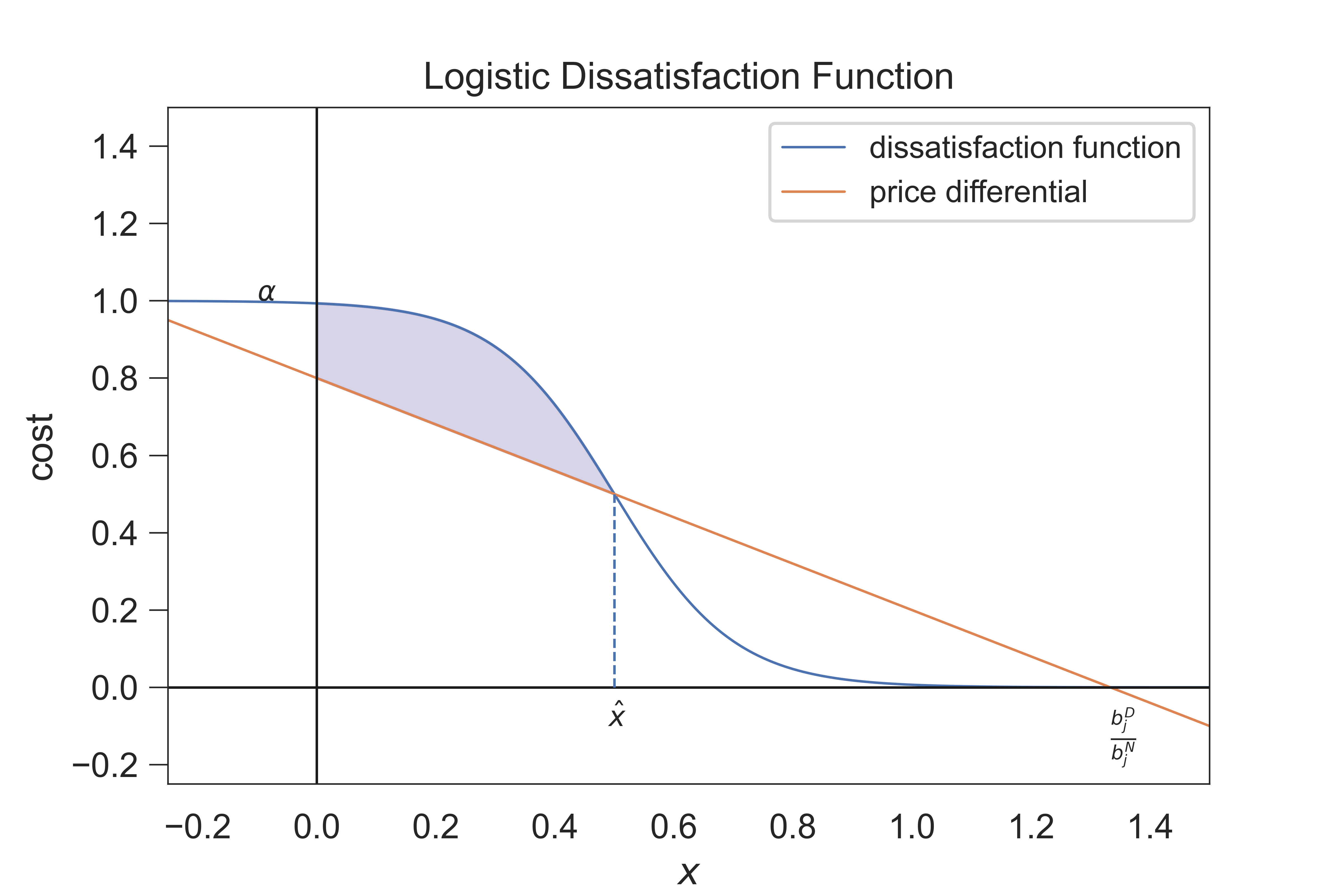}
    \caption{Switching to non-herding}
    \label{fig:2}
\end{figure}

\end{proof}

\begin{remark}[Implication of Theorem \ref{thm1}]
Theorem \ref{thm1} implies that at a fixed dissatisfaction rate and low EV uptake herding is likely to happen. Once the EV uptake increases (i.e., system congestion increases) herding is less likely to occur. This explains the early field trial observations where users moved to low cost periods in response to static price changes - reason being low system congestion. Naturally, low congestion is easier to manage with simpler price changes. In contrast, in a high EV uptake scenario a purely herding behaviour might not be achievable, even if desirable. There is also the possibility of reverse causation in a highly congested system with high dissatisfaction rates, $\beta > b$, and vice versa. 

Even without congestion, when delayed charging is employed as a strategy, our result shows the delay levels required to change charging behaviors.
\end{remark}

\begin{remark}
The above analysis leads to the question of whether or not the effect seen in Figures \ref{fig:1} and \ref{fig:2} is true in general for any monotone dissatisfaction function satisfying certain boundary conditions. The following example shows that this is not the case. Consider $f(x)=4-x^2$ with $b^D=4$ and $b^N=2$. In this case, herding is always deviated. Similarly, it is easy to construct a piece-wise monotone decreasing function where herding is never deviated. 
\end{remark}

%\subsubsection{Non-Herding Nash Equilibria}

Next we focus on illustrating two examples of non-herding Nash equilibria, one for each type of the dissatisfaction function.

\begin{example}[Linear Dissatisfaction] %Consider the symmetric game with $N=3$, $P=2$ for all charging periods, $T_N = 2d$. In particular, there is enough capacity at non-peak-time to service the charging needs of the equivalent of two players, and enough peak-time capacity to service at least one player.

%Assume Player 1 has $b_1^N = 0.2$ and $b_1^D = 0.3$ and $f_1^N(x) = 1-x$, and for players 2 and 3 we simply let $b_i^N<b_i^D$ with $i \in \{2,3\}$. Let $s_1$ with $\sum_{t=T_N+1}^T (s_1)_t = d$  and $\sum_{t=T_N+1}^T (s_i)_t = 0$ for $i \in \{2,3\}$ (i.e., Player 1 is peak-time charging only whilst players 2 and 3 opt for non-peak-time charging only). It is clear the neither Player 2 nor Player 3 have any incentive to deviate from their current strategy as their needs are fulfilled during non-peak-time charging at a lower rate than peak-time charging. As for Player 1, replacing $x=\frac{2}{3}$ into Equation~\ref{ineq:otherWay} we obtain the desired result. 

Consider the game with $n=3$, $P_t=2$,$d_i=2$, $T^N=2$, $T^D=2$. Assume Player 1 has $b^N_1=0.2$ and $b^D_1=0.3$, and Player 2 and 3 have $b^N_2=b^N_3=0.3$ and $b^D_2=b^D_3=0.4$. The dissatisfaction function for all players is $f(x)=1-x$. It is easy to show that in this scenario there is an equilibrium where Player 1 only charges in peak-time, and players 2 and 3 only charges non-peak time. The cost for Player 1 when charging in peak time only is 0.6, and the the costs for players 2 and 3 for charging in off-peak time  only is 0.6. Player 1 will not deviate to charging off-peak time since his cost increases to 0.7666. Similarly, players 2 and 3 will not deviate to peak-time since their costs increases to 1.
\end{example}

\begin{example}[Logistic Dissatisfaction] %Consider the same example as above except that $f_1^N = \frac{1.5}{1+e^{5(2x-1)}}$. As before, neither Player 2 nor Player 3 have any incentive to deviate from non-peak-time charging. As for player 1, and as was done with the previous example, replacing $x$ with $\frac{2}{3}$ into Equation~\ref{ineq:otherWay} results in equilibrium.

Consider the same scenario in the previous example except that the dissatisfaction is now logistic, $ f(x) = \frac{1.5}{1+e^{5(2x-1)} }$. As before we have an equilibrium where Player 1 charges in peak time, and players 2 and 3 charges in off-peak time. In fact in the following section we show that equilibria in logistic case are only of non-distributed type. 
\end{example}

\section{Pricing}\label{pricing_sec}

%We now turn our attention to our second research question, that of, existence of prices that induce desirable equilibria. Consider player 2 in Example 1. Suppose that his non-peak dissatisfaction is same but his peak dissatisfaction is $1- 1.92x$. Now consider player 2 deviating by moving his charging to peak time. Because there is 1 peak period and 2 non-peak periods, moving from non-peak charging to peak charging, while p1 and p3 stick to peak and non-peak respectively, implies that he will only receive 0.5 units. However, notice that, his overall cost by moving to peak decreases to 0.19 as against 0.2 when staying with non-peak time charging - hence this is a profitable deviation. This shows that pricing with player-specific dissatisfaction is challenging and hence leads to our second question. 

 We now turn our attention to our second research question. Before stating our result, we rewrite the equilibrium condition in the following convenient form. With slight abuse of notation, we will refer to $x^D_i$ for both amount of power of received and also proportion of power received against required (expected) power under the strategy profile $X$. Given strategy profiles $X$ and $Y$, Player $i$ will choose profile $X$ over $Y$ provided that
\begin{equation}\label{eq_cond_pricing}
b^D\Delta_i^D(X,Y) + b^N\Delta_i^N(X,Y) \leq \Gamma(X,Y)    
\end{equation}
where $\Delta^Q(X,Y) = x_i^Q - y_i^Q$, $Q\in \{N,D\}$, and $\Gamma(X,Y) = \sum_{Q\in \{N,D\}} f_i^Q(Y) - f_i^Q(X)$. 
%\[x^D_i = \frac{\mbox{received power in peak time under $X$}}{\mbox{expected power in peak time under $X_i$}},
%\]
%and $x^N_i$ is defined similarly. 
Concisely we write $X=(x^D_i,x^N_i)$. We can then rewrite (\ref{eq_cond_pricing}) as
\begin{equation}
b^D \leq \frac{\Gamma(X,Y)}{\Delta_i^D(X,Y)} - \frac{b^N\Delta_i^N(X,Y)}{\Delta_i^D(X,Y)}
\end{equation}
for $\Delta_i^D(X,Y) < 0$ and 
\begin{equation}
b^D \geq \frac{\Gamma(X,Y)}{\Delta_i^D(X,Y)} - \frac{b^N\Delta_i^N(X,Y)}{\Delta_i^D(X,Y)}
\end{equation}
for $\Delta_i^D(X,Y) > 0$. 

 Before proving our result on existence of prices, we note that in practice negative prices are possible. That is, users may get paid to charge at certain times. However, there is no logical explanation for both peak and non-peak prices to be negative. In fact, our analysis in Theorem \ref{pricing_thm} below shows that no situation will require peak prices to be negative. We formalize this by defining the feasible price profiles.

\begin{definition}
A price profile $b_j=(b^D_j,b^N_j)$ is feasible if $b^D_j>0$ for all $j\leq n$.
\end{definition}

\begin{theorem}\label{pricing_thm}
Given player-specific price-independent dissatisfaction functions $f_i^Q(\cdot)$, $Q\in \{N,D\}$: 
\begin{enumerate}[I.]
    \item \textit{Linear case}: $f_i^Q(x) = \alpha^Q_i - \beta^Q_ix$, $\alpha^Q_i,\beta^Q_i >0$,\begin{enumerate}
        \item a distributed strategy profile $X$ is a Nash equilibrium if and only if $b=\beta$ when dissatisfaction functions are linear, and, 
        \item there exists feasible prices, including $b=\beta$, for which any non-distributed strategy profile is an equilibrium.
    \end{enumerate} 
    \item \textit{Logistic case}: $f_i^Q(x) = \alpha^Q_i\left(\frac{1}{1+e^{\beta^Q_j\left(2x-1\right)}}\right)$, $\alpha^Q_i,\beta^Q_i>0$ \begin{enumerate}
        \item  there do not exist feasible prices such that a distributed strategy profile is a Nash equilibrium when dissatisfaction functions are logistic, and,
        \item there exists feasible prices for which any non-distributed strategy profile is a Nash equilibrium.
    \end{enumerate}

\end{enumerate}
\end{theorem}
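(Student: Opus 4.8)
\emph{Proof strategy.} The plan is to reduce every one of the four statements to the single best-response inequality~\ref{eq_cond_pricing} (and its two rearrangements), applied to the one player who, under the candidate profile, stands to move charge between a peak and a non-peak slot: for a distributed profile this is the player who splits, and for a non-distributed profile it is whichever player contemplates switching its whole allocation to the other type. The quantity to track throughout is $\phi_i^Q(x):=b_i^Q x+f_i^Q(x)$, the cost that a type-$Q$ slot delivering power fraction $x$ contributes to player $i$. In the linear case $\phi_i^Q(x)=\alpha_i^Q+(b_i^Q-\beta_i^Q)x$ is affine in the congestion level and, exactly when $b_i^Q=\beta_i^Q$, constant; in the logistic case $\phi_i^Q(x)=b_i^Q x+\alpha_i^Q\big(1+e^{\beta_i^Q(2x-1)}\big)^{-1}$ is never affine — since $f_i^Q$ is strictly monotone with its only inflection at $x=\tfrac12$, $\phi_i^Q$ is strictly monotone on a neighbourhood of every $x\ne\tfrac12$ and constant nowhere. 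This affine-versus-strictly-monotone dichotomy is the engine of the whole theorem.

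For part~I.(a), necessity comes first: if a distributed profile $X$ is a Nash equilibrium, the splitting player cannot profit by moving a unit of charge to another slot of the same type carrying a different congestion level, nor to a slot of the opposite type; writing these no-improvement inequalities and feeding in the affine form of $\phi_i^Q$, the only way the two opposing directions can hold at once is $b_i^Q=\beta_i^Q$ for the types in use — after a reduction analogous to the (demand, capacity)-homogeneous one of Section~\ref{herding_sec}, which guarantees an alternative slot is always on hand, this yields $b=\beta$. For sufficiency, when $b=\beta$ the per-slot cost collapses to the constant $\alpha_i^Q$, so a player's total cost depends only on how many peak versus non-peak slots it occupies; I would then exhibit a distributed profile in which at least one player's peak-count is forced to an interior value by the capacity bounds $T^D,T^N$, and check directly through~\ref{eq_cond_pricing} that no unilateral move pays.

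Parts~I.(b) and II.(b) are the mild cases. In a non-distributed profile nobody splits, so the admissible deviations are only (i) a player rebalancing within one type and (ii) a player moving its entire allocation across types. Choosing the profile's congestion balanced within each type disposes of (i); for (ii) I would use price discrimination to set, player by player, a feasible $b_i^D>0$ — large for the all-non-peak players, small for the all-peak players — so that~\ref{eq_cond_pricing} rules the switch out; and I would note that whenever the profile's type assignment agrees with the ordering of the $\alpha_i^Q$, the symmetric choice $b=\beta$ already works, which is the ``including $b=\beta$'' clause in~I.(b).

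The crux is part~II.(a): no feasible price can make a distributed profile a Nash equilibrium under logistic dissatisfaction, precisely because the flattening available in the linear case ($b^Q=\beta^Q$) has no analogue — $\phi_i^Q$ is strictly monotone wherever a deviation could bite. The plan is to assume a distributed equilibrium, write the splitting player's no-swap inequalities alongside its within-type no-rebalancing inequalities, and combine them with the boundary facts $f_i^Q(0)\approx\alpha_i^Q$, $f_i^Q(1)\approx 0$ and the realistic restriction $b_i^D<\alpha_i^D$ already invoked for Theorem~\ref{thm1}; strict monotonicity of $\phi_i^Q$ then forces the player to strictly prefer consolidating all of its charge into one type at the congestion level it already faces, so some no-deviation inequality must fail unless one is driven to $b_i^D\le 0$, outside the feasible range. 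I expect the genuine difficulties here to be the bookkeeping of which same-type slots are actually available to the deviating player and the exclusion of the knife-edge configurations in which several of these inequalities hold with equality; with the framework above in place, the linear and the non-distributed sub-cases should be comparatively routine.
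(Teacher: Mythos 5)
Your strategy diverges from the paper in a way that matters, and the crux --- Part II.(a) --- is not actually argued. The paper's proof of II.(a) works entirely in price space: it pairs the two opposite cross-type deviations (peak to non-peak, $\Delta^D=a_y>0$, and non-peak to peak, $\Delta^D=-a_z<0$) and shows, via (\ref{eq_cond_pricing}), that they induce two half-spaces in the $(b^N,b^D)$-plane whose boundary lines are parallel and whose intersection is empty when $\alpha^Q>0$ --- the logistic $\Gamma$-terms fail to line up the way the linear ones do, so no feasible $(b^N,b^D)$ with $b^D>0$ exists. Your sketch replaces this with the assertion that strict monotonicity of $\phi_i^Q$ ``forces the player to strictly prefer consolidating'' unless $b^D\le 0$; no mechanism is given for why price savings cannot offset the dissatisfaction change, which is precisely the point at issue, and you additionally lean on $b^D<\alpha^D$, a restriction invoked informally for Theorem \ref{thm1} but not part of Theorem \ref{pricing_thm}, whose feasibility notion only requires $b^D>0$ (with $b^N$ unrestricted, possibly negative). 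The preliminary claim is also false as stated: $\phi_i^{Q\prime}(x)=b^Q+f_i^{Q\prime}(x)$ can vanish at up to two interior points (the logistic slope attains each value in its range twice), so $\phi_i^Q$ need not be strictly monotone near every $x\ne\tfrac12$; what is true is only that it is affine on no interval, and that alone does not yield the incompatibility.

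There is a second gap in Part I, caused by a model mismatch. In Section \ref{pricing_sec} dissatisfaction is assessed on the aggregate per-type quantity $x_i^Q$, so in the linear case $\Gamma(X,Y)=\beta^D\Delta^D+\beta^N\Delta^N$ (the $\alpha$'s cancel) and (\ref{eq_cond_pricing}) becomes $(b^D-\beta^D)\Delta^D+(b^N-\beta^N)\Delta^N\le 0$; necessity of $b=\beta$ follows because a distributed profile admits deviations realizing all four sign patterns of $(\Delta^D,\Delta^N)$, and sufficiency is immediate since at $b=\beta$ every deviation is exactly cost-neutral, for the given profile $X$. Under your per-slot accounting $\phi_i^Q$, at $b=\beta$ each occupied slot costs the constant $\alpha_i^Q$, so whenever $\alpha_i^D\ne\alpha_i^N$ the splitting player strictly prefers to consolidate into the cheaper type --- which is why you retreat to ``exhibit a distributed profile whose peak-count is forced by capacity bounds.'' That proves existence of some distributed equilibrium, not the stated equivalence ``$X$ is a Nash equilibrium iff $b=\beta$'' for the given $X$, and the same mismatch is what forces your unnecessary side condition (type assignment agreeing with the ordering of the $\alpha_i^Q$) on the ``including $b=\beta$'' clause of I.(b), which the paper gets for free. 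Your treatment of II.(b) is broadly in the spirit of the paper's (one whole-switch inequality plus within-type inequalities, intersected in price space, possibly with negative $b^N$), but as it stands the proposal does not establish I.(a) as stated and leaves II.(a) unproved.
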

\begin{proof}
For ease of exposition, we drop the player index. 

\textit{Part I, (a) and (b).}\quad In the linear dissatisfaction case where $f^Q = \alpha^Q + \beta^Q x^Q$, $Q\in \{N,D\}$, we know at least one solution exists. Namely, $b^D = \beta^D$ and $b^N = \beta^N$. 

Now, let profiles $X, Y$ and $Z$ so that $\Delta^D(X,Y) < 0 < \Delta^D(X,Z)$ and $\frac{\Delta^N(X,Y)}{\Delta^D(X,Y)} = \frac{\Delta^N(X,Z)}{\Delta^D(X,Z)} = k$, for some $k\in \R$. We get that the only values of $b^N$ and $b^D$ satisfying $X$ to be an equilibrium are those for which
\[
\frac{\Gamma(X,Z)}{\Delta^D(X,Z)} - b^Nk\leq b^D \leq \frac{\Gamma(X,Y)}{\Delta^D(X,Y)} - b^Nk.
\]
Since $f^Q = \alpha^Q + \beta^Q x^Q$, $Q\in \{N,D\}$, we obtain the line
\[
b^D = (\beta^D + \beta^Nk) - b^Nk.
\]
Take any other pair  $Y'$ and $Z'$ so that $\Delta^D(X,Y') < 0 < \Delta^D(X,Z')$ and $\frac{\Delta^N(X,Y)}{\Delta^D(X,Y)} = \frac{\Delta^N(X,Z)}{\Delta^D(X,Z)} = k'$, for some $k'\in \R$ as above we get
\[
b^D = (\beta^D + \beta^Nk') - b^Nk'.
\]
All of these lines $y = (\beta^D + \beta^Nk) - xk$ have a unique point in common. Namely, $x = \beta^N$ and $y = \beta^D$. To see that $b=\beta$ is the only possible solution for a given strategy profile $X$ consider the following four possible representative deviations:
\begin{enumerate}
    \item peak to non-peak deviation from $X$ to $Y$: $\Delta^D(X,Y)=a$, $\Delta^N(X,Y)=-b$, for any $a,b >0$;
    \item non-peak to peak deviation from $X$ to $Y$: $\Delta^D(X,Y)=-a$, $\Delta^N(X,Y)=b$, for any $a,b >0$;
    \item within peak and within non-peak deviation from $X$ to $Y$: $\Delta^D(X,Y)=a$, $\Delta^N(X,Y)=b$, for $a,b >0$. This corresponds to user shifting to more congested periods hence getting less overall quantity.
     \item within peak and within non-peak deviation from $X$ to $Y$: $\Delta^D(X,Y)=-a$, $\Delta^N(X,Y)=-b$, for $a,b >0$. This corresponds to user shifting to less congested periods hence getting more overall quantity.
\end{enumerate}
 
For $X$ to be an equilibrium all the above deviations should not be profitable deviations. That is, (\ref{eq_cond_pricing}) should be satisfied for each of these four deviations. The only feasible price profile that satisfies these four constraints is $b = \beta$. Note that this applies for both distributed and non-distributed cases, and only in non-distributed case there are other prices (not just $b=\beta$) because out of the four representative deviations only two apply to the distributed case. \\

\textit{Part II(a).}\quad For a given strategy profile $X$ consider the following two deviations:
\begin{itemize}
 \item peak to non-peak deviation from $X$ to $Y$: $\Delta^D(X,Y)=a_y$, $\Delta^N(X,Y)=-b_y$, for any $a,b >0$;
    \item non-peak to peak deviation from $X$ to $Y$: $\Delta^D(X,Y)=-a_z$, $\Delta^N(X,Y)=b_z$, for any $a,b >0$;
\end{itemize}

By (\ref{eq_cond_pricing}), these two deviations induce two half-spaces, say $h_1$ and $h_2$. Since $h_1$ is not contained in $h_2$, and vice versa, a feasible price profile only exists if $h_1 \cap h_2$ is non-empty. However, with $a_z<a_y$, $b_z<b_y$, $\frac{a_z}{b_z} = \frac{a_y}{b_y}$ and $\alpha^Q>0$, $h_1$ and $h_2$ are created by two non-intersecting (parallel) half-lines with same slope.

As an example, consider $X=(0.5,0.5)$ and deviations $Y$ and $Z$ such that $a_y=-b_y=0.3$ and $a_z=-b_z=0.4$. The visualization of the half-spaces corresponding to the two deviations are shown in Figure \ref{fig:half_space}, where \textit{red} and \textit{blue} correspond to the two half-spaces.
\begin{figure}
    \centering
    \includegraphics[scale=0.99]{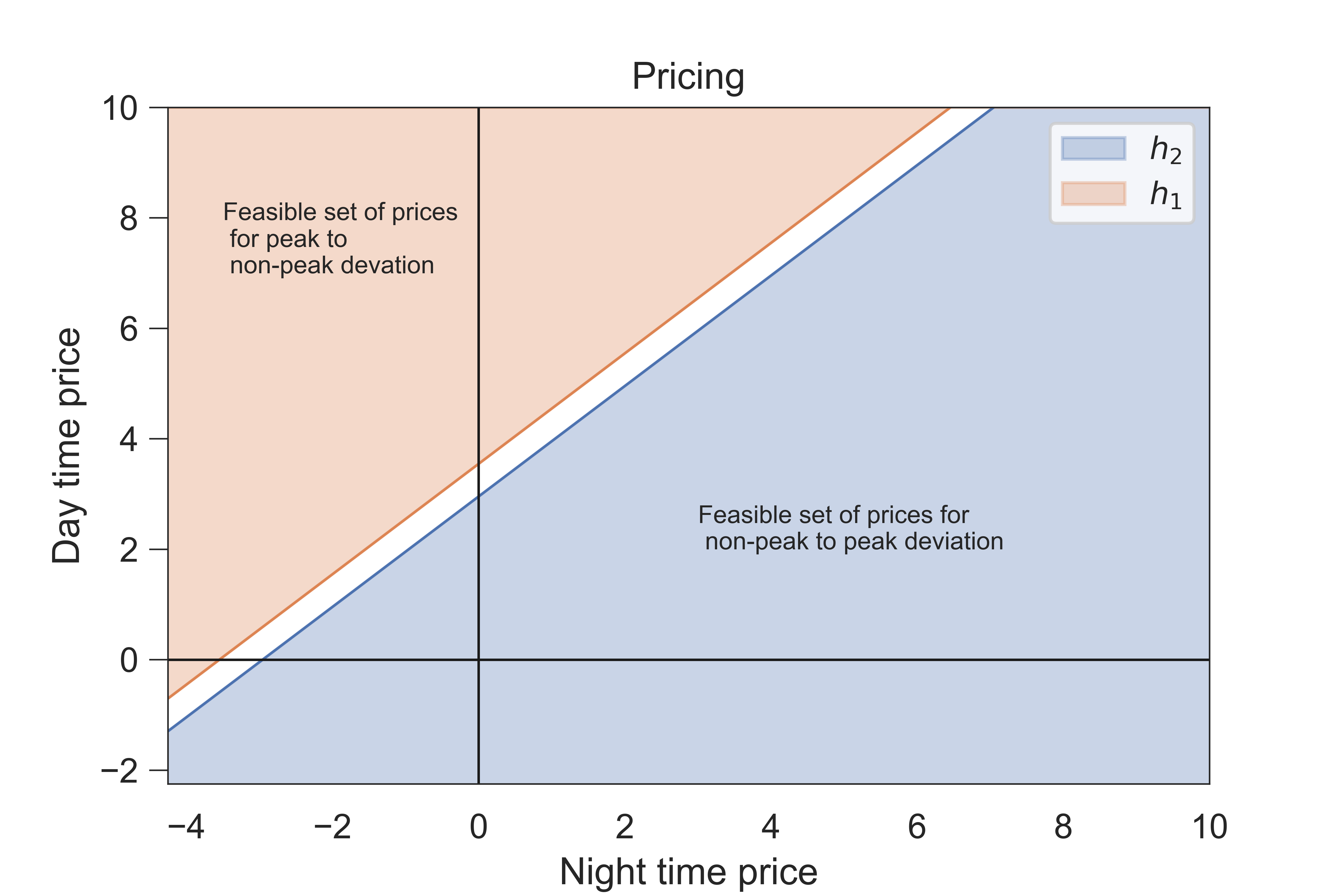}
    \caption{No feasible price profile exists.}
    \label{fig:half_space}
\end{figure}

\textit{Part II (b).} \quad In the case of non-distributed equilibria, there are only two possible type of deviations. For example, without loss of generality, consider a user only charging at peak, then two possible types of deviation are within peak and peak to non-peak. Note that the second type of deviation consists of exactly one deviation unlike the first one. That is, a deviation from peak to non-peak involves moving all charging from peak to non-peak, this is because part movement implies a contradiction of result in part(a). Therefore, for a given $X$ there is only one inequality from (\ref{eq_cond_pricing}) for peak to non-peak deviation (say $Y$), with $\Delta^N(Y)=\Delta^D(Y)$.

Now note that 
\begin{equation*}
    \frac{\Gamma^{Q}}{\Delta^{Q}}\geq 0, \quad Q\in \{N,D\}
\end{equation*}

This implies within peak deviation (say $Z$) will result in one of the following two half-spaces
\begin{equation*}
     b^D\leq \frac{\Gamma^{D}(Z)}{\Delta^{D}(Z)} \quad \text{and} \quad  b^D\geq \frac{\Gamma^{D}(Z)}{\Delta^{D}(Z)}. 
\end{equation*}

The intersection of these half-spaces with half-space defined by peak to non-peak inequality defines the feasible set of prices. Note that this intersection may also lead to negative non-peak prices. 

\end{proof}

\begin{remark}[Implication of Theorem \ref{pricing_thm}, Linear dissatisfaction rate] Implication of this result is that to achieve a desirable (distributed) equilibrium, price discrimination, in other words, personalized pricing is necessary. This also shows that price-based charging control is only possible due to heterogeneity in dissatisfaction levels. However, without the knowledge of dissatisfaction parameters this is impossible. Therefore, our analysis suggests adaptive price experimentation can reveal user dissatisfaction and hence will enable converging to a player specific price profile. Learning (or query) complexity of linear dissatisfaction is an interesting question for further study, such questions have been investigated in routing games context in \textcite{bhaskar2014achieving} for linear and polynomial latencies. 
\end{remark}

\begin{remark}[Implication of Theorem \ref{pricing_thm}, Logistic dissatisfaction rate]
Theorem \ref{pricing_thm} implies that when dissatisfaction are logistic, the only type of equilibria are of non-distributed type, that is, each player either chooses peak or non-peak times. This is what we observe in example 2. Note that herding is non-distributed equilibrium if it is one. One feasible pricing scheme that induces a given outcome as equilibrium can be obtained by setting
\begin{equation*}
    b= \left(\frac{\alpha_j^D{\Gamma}^D_J(Z)}{{\Delta}^D_j(Z)},  -\frac{\left(\left(\alpha^N_j\Gamma^N_j+\alpha^D_j\Gamma^D_j\right)\left(\Delta^D_j(Y)\right)+\left(\Delta^D_j(Z)\right)\Gamma^D_j\alpha^D_j\right)}{\left(\Delta^D_j(Y))\right)\left(\Delta^D_j(Z)\right)}\right)
\end{equation*}
\end{remark}

\section{Mediation/Co-ordination}\label{cce_sec}

To answer our final question we consider studying CCEs which inherently capture the role of an aggregator. To the best of our knowledge CCEs have not been explored, except in \textcite{chakraborty2014demand}, where authors frame decentralized charging as demand response game and show a bound on CCE's Price of Anarchy as a corollary of their main result. However, theoretically analyzing CCEs in EV game is difficult due to discrete strategy sets and the structure of the cost function. For this reason we consider the continuous version of the EV game. To differentiate from the discrete version we will call this continuous version the \textit{C-EV game}. Furthermore, we consider a symmetric setting where all users experience same dissatisfaction and hence are charged same prices akin to setting considered in \textcite{lin2021minimizing}. Formally, the game is as follows, Player $i$ selects a quantity $q_i\in [0,M]$ in peak time (peak time plug-in) and $M-q_i$  in non-peak time, where $M$ is the individual all-day demand for all players. We define the peak and non-peak cost functions with linear dissatisfaction as before as follows:

\begin{equation}
    c^D_i(q_i,q_{-i}) = b^Dq_i\left(1-\frac{Q_D}{M_D}\right) + a^Dq_i\left(\frac{Q_D}{M_D}\right) + r_D, \text{ and }
\end{equation}
\begin{equation}
    c^N_i(q_i,q_{-i}) = b^N(M-q_i)\left(1-\frac{W-Q_D}{M_N}\right) + a^N(M - q_i)\left(\frac{W-Q_D}{M_N}\right) + r_N.
\end{equation}
\noindent
where $Q_D = \sum_{j=1}^N q_j$ is the total quantity requested in peak time by all players, $M_D$ and $M_N$ are normalized constants associated with available capacities in peak and non-peak times, and $W (=nM)$ is the total demand of all players. The first term in $c^D_i(q_i,q_{-i})$ is the direct cost (which is increasing in $q_i$ and decreasing in $Q_D$) and the second term is the peak time cost of dissatisfaction (which is increasing in $Q_D$). Note that in this case we are expressing cost as a function of $\frac{q_iQ_D}{M_D}$ which is shortfall or expected power not received, thus, making our analysis much simpler. Given a strategy profile $(q_i, q_{-i})$ we have the overall cost of Player $i$ is 
\begin{equation}
     c_i(q_i,q_{-i}) = c^D_i(q_i,q_{-i}) +  c^N_i(q_i,q_{-i}).
\end{equation}
 Note that $c_i^D$ and $c_i^N$ are both convex when $a^D > b^D$ and $a^N > b^N$, hence $c_i$ is convex too. The cost function can be re-expressed as:

%\textcolor{red}{Plan: this is a convex game with continuous strategy sets. So, we can compute an expression for Nash and we can also figure out CCE to see if there is a better CCE as in different from Nash. If there is such a CCE then it means users are better of committing to an aggregator.}

\begin{equation}
    c_i(q_i,q_{-i}) = Aq_i + Bq^2_i + Bq_i\sum_{j\neq i}q_j - R\sum_jq_j + \eta,
\end{equation}\noindent
where $A = b^D-\frac{a^NW}{M_N} - \frac{b^N(M_N-W)}{M_N}$, $B =\left( \frac{a^D-b^D}{M_D} - \frac{a^N-b^N}{M_N}\right)$, $R=\frac{(a^N-b^N)M}{M_N}$ and $\eta = r_D+r_N$.  As proved in the previous section and owing to linear dissatisfaction, selecting prices accordingly will always ensure the existence of Nash equilibria. Hereafter, we assume for the choice of prices and dissatisfaction parameters, Nash equilibrium exists. In which case, Nash equilibrium in C-EV game can be characterized as:
\begin{equation}\label{nash-qty}
    q^{Nash}_i = \max \left\{ 0,\min \left\{\frac{R-A}{B(n+1)},M\right\}\right\}.
\end{equation}

%\textcolor{red}{Analyzing (\ref{nash-qty}) we find that as  when the peak dissatisfaction compared to peak unit rate ($b^D$) is greater compared to the same in night (as $B$ increases) peak charging is less preferred; }

We will now analyze CCEs in this game. 
It should be noted that in the context of CCE, the only decision 
EV users  have to make (simultaneously and independently) is whether to
`commit' to the mediating agency (e.g., aggregator) or not. Once this decision is made, the `committed' users act according to the recommendation of the agency. The concept of CCE requires stronger commitment of the EV users to a
controlled charging scheme, in the sense that the EV users have to decide (simultaneously)
whether to abide by the recommendations of the agency or act on their own. Formally, a recommendation device is a lottery (probability) distribution over all possible outcomes. In other words, a mediating agency recommends charging strategies to EV users based on this distribution.

 Let $Q_i=[0,M]$ and $\mathbb{Q}=\prod_{i}Q_{i}$, with
generic elements q$_{i}$ and q respectively, and continuous cost functions $c_{i}:$ $\mathbb{Q}\rightarrow \mathbb{R}$, $i=1,\ldots ,n$. Let $\mathcal{L}(\mathbb{Q})$, with generic element $L$, and $\mathcal{L}%
(Q_{i})$, with generic element $\ell _{i}$, denote the sets of probability
measures on $\mathbb{Q}
$ and $Q_{i}$, respectively. For simplicity, let the expectation of $c_{i}$ with respect to $L$ be
denoted by $c_{i}(L)$.

%The deterministic distribution at $z$ is denoted by $\delta _{z}$, and for product distributions such as $\delta _{q_{1}}\otimes \ell _{2}\otimes \ldots \otimes \ell _{n}$ we write $u_{1}(\delta _{x_{1}}\otimes \ell_{2}\otimes \ldots \otimes \ell _{n})$ simply as $u_{1}(q_{1},\ell_{2},\ldots ,\ell _{n})$, and for short $u_{1}(q_{1},\ell _{-1})$ or more generally for any Player $i$ we write $u_{i}(q_{i},\ell _{-i})$. Given $L\in \mathcal{L}(\mathbb{Q})$, we write $L^{i}$ for the marginal distribution of $L$ on $Q_{i}$, defined as follows: $\forall f\in \mathbb{C}(Q_{i})$, $f(L^{i})=f^{\ast }(L)$%, where $f^{\ast }(\emph{q})=f(x_{1})$ for all $q\in \mathbb{Q}$. 
With this notation the following definition establishes when a distribution is CCE. 

\begin{definition}\label{cce_def}
\label{cce-def} \textit{A coarse correlated equilibrium (CCE) of the game }$%
G $\textit{\ is a distribution }$L\in \mathcal{L}(
\mathbb{Q}
)$\textit{\ such that }$c_{i}(L)\leq c_{i}(q_{i},L^{-i})$ for all $q\in 
\mathbb{Q}
$.
\end{definition}

If $L$ is the
distribution of a symmetric random variable $Z=(Z_{1},\ldots ,Z_{n})$,
consider respectively the expected values of $Z_{i}$, $Z_{i}^{2}$, and $%
Z_{i}\cdot Z_{j\neq i}$, $i=1,\ldots ,n$, and denote them as below;: 
\begin{align*}
\gamma_1 & =E_{L}[Z_{i}]\text{,} \\
\gamma_2 & =E_{L}[Z_{i}^{2}]\text{ and} \\
\gamma_3 & =E_{L}[Z_{i}\cdot Z_{j}].
\end{align*}%

\begin{theorem}\label{main_cce_result}
CCE in C-EV game coincide with Nash equilibria.
\end{theorem}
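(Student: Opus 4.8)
The plan is to exploit two features that are already in place. First, the reduced cost
$c_i(q_i,q_{-i}) = Aq_i + Bq_i^2 + Bq_i\sum_{j\ne i}q_j - R\sum_j q_j + \eta$
can be rewritten as $c_i(q_i,q_{-i}) = Aq_i + (Bq_i-R)\sum_j q_j + \eta$, so that each player's cost depends on the others only through the aggregate $\sum_j q_j$. Second, under the standing convexity/existence hypotheses ($B>0$) the game has a strictly convex exact potential, so its pure Nash equilibrium is unique and is exhibited by (\ref{nash-qty}). I will prove both inclusions. That every Nash profile, viewed as a point mass $L=\delta_{(q^{Nash},\dots,q^{Nash})}$, is a CCE is immediate: for a point mass the inequality in Definition~\ref{cce_def} reduces exactly to the Nash best-response inequality $c_i(q^{Nash},q^{Nash}_{-i})\le c_i(q_i,q^{Nash}_{-i})$.

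For the converse, let $L$ be a CCE and set $\mu_i:=E_L[Z_i]\in[0,M]$, $S:=\sum_j Z_j$ and $\mu_S:=E_L[S]$. Since $c_i(q_i,\cdot)$ is affine in the opponents' actions, $c_i(q_i,L^{-i})=c_i(q_i,\mu_{-i})$, whereas $E_L[c_i(Z)]=A\mu_i+B\,E_L[Z_iS]-R\mu_S+\eta$. The decisive step is to feed the \emph{admissible} deviation $q_i=\mu_i$ (admissible because $\mu_i\in[0,M]$) into the CCE inequality $E_L[c_i(Z)]\le c_i(\mu_i,\mu_{-i})$; after cancellation this reads $B\,E_L[Z_iS]\le B\mu_i\mu_S$, i.e.\ $\operatorname{Cov}_L(Z_i,S)\le 0$, for every $i$. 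Summing over $i$ gives $\operatorname{Var}_L(S)=\sum_i\operatorname{Cov}_L(Z_i,S)\le 0$, hence $\operatorname{Var}_L(S)=0$ and $S\equiv\mu_S$ $L$-almost surely. In the symmetric parametrisation this is precisely $\gamma_2+(n-1)\gamma_3=n\gamma_1^2$.

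With $S$ constant we get $E_L[Z_iS]=\mu_i\mu_S$, so $E_L[c_i(Z)]=A\mu_i+B\mu_i\mu_S-R\mu_S+\eta=c_i(\mu_i,\mu_{-i})$, and the CCE inequality collapses to $c_i(\mu_i,\mu_{-i})\le c_i(q_i,\mu_{-i})$ for all $q_i\in[0,M]$, i.e.\ $\mu_i$ best-responds to $\mu_{-i}$. As this holds for every $i$, the mean profile $(\mu_1,\dots,\mu_n)$ is by definition a pure Nash equilibrium, hence equals $(q^{Nash},\dots,q^{Nash})$ by uniqueness; therefore $c_i(L)=c_i(\mu_1,\dots,\mu_n)=c_i(q^{Nash},\dots,q^{Nash})$ for every $i$, which is the asserted coincidence of equilibrium payoffs. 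When $q^{Nash}\in\{0,M\}$, the mean condition $\mu_i=q^{Nash}$ together with $Z_i\in[0,M]$ even forces $Z_i\equiv q^{Nash}$, so $L$ is literally the Nash point mass.

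The step I expect to be the real obstacle, and the heart of the argument, is the $q_i=\mu_i$ substitution together with the ensuing squeeze $0\le\operatorname{Var}_L(S)\le 0$: this is the only place strict convexity ($B>0$) is used in an essential way, and it is what removes any possibility of a correlation/mediation gain. The rest is routine: the affine-in-opponents simplification of $c_i(q_i,L^{-i})$, and the fact that a profile all of whose coordinates best-respond to the rest is a Nash equilibrium. One point I would make explicit in the write-up is that ``coincide'' should be read as coincidence of induced cost profiles (equivalently, mediation cannot lower any user's cost): in the interior case non-degenerate CCE do exist --- for instance correlated, mean-zero perturbations of the $q_i$ about $q^{Nash}$ that keep $\sum_j q_j$ fixed --- but every one of them is payoff-equivalent to the Nash equilibrium.
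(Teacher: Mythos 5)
Your argument is correct, but it reaches the conclusion by a genuinely different route than the paper. The paper works through an explicit moment characterization: Lemma \ref{n-quad-cce-cons} rewrites the CCE constraint for \emph{symmetric} distributions in terms of $(\gamma_1,\gamma_2,\gamma_3)$, Lemma \ref{cce-set} adds the positive-semidefiniteness constraints (\ref{wcce2})--(\ref{wcce3}) imported from the cited earlier work, and the proof of Theorem \ref{main_cce_result} then observes that (\ref{wcce3}) and (\ref{wcce4}) are half-spaces with parallel boundaries whose gap, the concave quadratic (\ref{rhs_diff}), has maximum $0$ attained only at $\gamma_1=\frac{R-A}{B(n+1)}$, so the average peak quantity must equal the Nash quantity. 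You instead plug the \emph{admissible deviation} $q_i=\mu_i$ into Definition \ref{cce_def}, obtain $B\,\mathrm{Cov}(Z_i,\sum_j Z_j)\le 0$ for each $i$, sum to force $\mathrm{Var}(\sum_j Z_j)=0$, and then show the mean profile best-responds and is the unique Nash point by the strictly convex exact potential; note this is essentially the same information the paper extracts (your $\mathrm{Var}(S)\ge 0$ is the PSD constraint (\ref{wcce3}), and the binding of (\ref{wcce3})--(\ref{wcce4}) at the touching point is exactly $\mathrm{Var}(S)=0$), but your derivation is more elementary, does not restrict to symmetric $L$, avoids the external PSD lemma, and makes explicit what ``coincide'' means: cost (and mean) equivalence, with possibly non-degenerate CCEs whose aggregate $\sum_j Z_j$ is almost surely constant --- precisely the boundary case $\gamma_2+(n-1)\gamma_3=n\gamma_1^2$ that the paper's system still admits. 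What the paper's route buys in exchange is an explicit description of the CCE set in moment space. Two small caveats you share with the paper and should state: the whole argument needs $B>0$ (your covariance step and uniqueness-via-potential, the paper's concavity of (\ref{rhs_diff}) and formula (\ref{nash-qty})), and the identification of the mean profile with (\ref{nash-qty}) rests on the paper's standing assumption that a Nash equilibrium of that form exists.
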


To prove Theorem \ref{main_cce_result}, we first show that the CCE equilibrium constraint (as in Definition \ref{cce_def}) for the $n$-player quadratic game can be completely expressed in terms of these three moments of a symmetric probability distribution. 
\begin{lemma}
\label{n-quad-cce-cons} Any symmetric probability distribution $L\in \mathcal{L}(\mathbb{%
Q})$ is a CCE of the C-EV game if and only if 
\begin{equation*}
\min_{z\geq 0}\left\{ (A-R+B(n-1)\gamma_1 )z+Bz^{2}\right\} \geq
(A-R)\gamma_1 +B(\gamma_2 + (n-1)\gamma_3)  \text{;}
\end{equation*}
\end{lemma}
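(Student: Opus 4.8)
The plan is to turn Definition~\ref{cce_def}, specialized to the explicit quadratic cost, into a single scalar inequality among the moments $\gamma_1,\gamma_2,\gamma_3$, by evaluating the two expectations appearing in the CCE constraint and then reading the resulting one-parameter family of inequalities as a ``minimum over deviations'' statement. To do so I would first invoke symmetry of both $L$ and the cost functions to reduce the $n$ CCE inequalities to the one for player $1$: $c_1(L)\le c_1(q_1,L^{-1})$ for every $q_1\in[0,M]$. Writing $\sum_j q_j=q_1+\sum_{j\neq 1}q_j$, player $1$'s cost reads $(A-R)q_1+Bq_1^2+Bq_1\sum_{j\neq 1}q_j-R\sum_{j\neq 1}q_j+\eta$. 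Taking the expectation against $L$ and substituting $E_L[Z_i]=\gamma_1$, $E_L[Z_i^2]=\gamma_2$ and $E_L[Z_iZ_j]=\gamma_3$ for $j\neq i$ yields $c_1(L)=(A-Rn)\gamma_1+B\gamma_2+B(n-1)\gamma_3+\eta$. For a fixed deviation $q_1=z\in[0,M]$ only the marginals of $Z_2,\dots,Z_n$ enter, each with mean $\gamma_1$; since $z$ never multiplies two distinct opponents' quantities the cross moments drop out, giving $c_1(z,L^{-1})=(A-R+B(n-1)\gamma_1)z+Bz^2-R(n-1)\gamma_1+\eta$.

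Subtracting the two expressions, the constant $\eta$ cancels and collecting the terms linear in $\gamma_1$ leaves the CCE inequality $c_1(L)\le c_1(z,L^{-1})$ in the equivalent form
\[
(A-R+B(n-1)\gamma_1)z+Bz^2\ \ge\ (A-R)\gamma_1+B\bigl(\gamma_2+(n-1)\gamma_3\bigr),
\]
required to hold for every $z$ in the deviation set, which is precisely the statement that the minimum of the left-hand side dominates the right-hand side. Both directions of the ``if and only if'' then reduce to this observation: a CCE forces the inequality for all feasible $z$, hence for the minimizer; conversely, if the minimum dominates the right-hand side, every feasible deviation is unprofitable.

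The one point I expect to need genuine care is the passage from $\min_{z\in[0,M]}$ (the actual deviation set) to $\min_{z\ge 0}$ as written in the statement. The hypotheses $a^D>b^D$ and $a^N>b^N$ force $B>0$, so the left-hand side is a strictly convex parabola with unconstrained minimizer $z^\star=(R-A-B(n-1)\gamma_1)/(2B)$; for the two minimizations to agree one must check that, in the parameter regime where a Nash equilibrium of the C-EV game exists (so that the Nash quantity (\ref{nash-qty}) already lies in $[0,M]$), the point $z^\star$ likewise lies in $[0,M]$ — equivalently, that the upper constraint $z\le M$ is slack at the minimizer. Granting that (it holds, for instance, at $\gamma_1=q^{Nash}_i$, where a short computation gives $z^\star=q^{Nash}_i$), the constrained and unconstrained minima coincide and the stated equivalence follows.
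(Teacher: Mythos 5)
Your proposal is correct and follows essentially the same route as the paper's own proof: compute $c_1(L)$ and $c_1(z,L^{-1})$ from the quadratic cost using the moments $\gamma_1,\gamma_2,\gamma_3$, and state that the CCE condition is the resulting inequality holding for every admissible deviation $z$. Your extra care about replacing $\min_{z\in[0,M]}$ by $\min_{z\geq 0}$ (using $B>0$ and the location of the unconstrained minimizer) addresses a point the paper's proof silently glosses over, and is a reasonable refinement rather than a divergence in method.
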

\begin{proof}

When all EV users commit, Player $i$'s cost function is
\begin{equation}
    c_i(L) = (A-R)\gamma_1 + B(\gamma_2 + (n-1)\gamma_3) - R(n-1)\gamma_1 + \eta,
\end{equation}
on the other hand when Player $i$ chooses not to commit while all others commit to $L$, then Player $i$'s cost is:
\begin{equation}
    c_i(z,L^{-i}) = Az + Bz^2 + Bz(n-1)\gamma_1 - Rz - R(n-1)\gamma_1 + \eta
\end{equation}
For equilibrium, we need $c_i(L) - c_i(z,L^{-i})\leq 0$, which gives the result. 
\end{proof}

The set of CCE can now be charaterised as given in the following proposition:
\begin{lemma}\label{cce-set}
The set of CCE in C-EV game are characterized by the following set of constraints:
\begin{align}
 \gamma_2  \geq \gamma_3; & \quad  \gamma_2 \leq M\gamma_1  \label{wcce2} \\
\gamma_2 +(n-1)\gamma_3 & \geq n\gamma_1 ^{2} \label{wcce3} \\
\gamma_2 +(n-1)\gamma_3 & \leq \frac{1}{B}\left( (R-A)\gamma_1 - \frac{1}{4B}\left((R-A) - B(n-1)\gamma_1 \right)^2 \right) \label{wcce4}\\
\gamma_1 ,\gamma_3 & \geq 0;  \label{wcce6}
\end{align}
\end{lemma}
\begin{proof}
From \textcite{dokka2022equilibrium}, we have (\ref{wcce2})--(\ref{wcce3}). For completeness we reproduce some of the arguments below.
For $L$ to be feasible, it should be
true that the variance-covariance matrix underlying $L$ is positive
semi-definite (PSD). Omitting the subscript $L$ for ease of notation, let $%
Y_{i}=Z_{i}-\gamma_1 $ for all $i$: 
\begin{align*}
Var(Z_{i})=E[Y_{i}^{2}]& =\gamma_2 -\gamma_1 ^{2}=\gamma_2 ^{\ast }\text{,} \\
Cov(Z_{i},Z_{j})=E[Y_{i}Y_{j}]& =\gamma_3 -\gamma_1 ^{2}=\gamma_3 ^{\ast }\text{.}
\end{align*}%
We need to express a necessarily PSD matrix with $\gamma_2 ^{\ast }$ on the diagonal and $%
\gamma_3 ^{\ast }$ on the off-diagonal. This means that we have for all 
$x\in \mathbb{R}^{n}$ 
\begin{equation}
\gamma_2 ^{\ast }\left(\sum_{1}^{n}x_{i}^{2}\right)+2\gamma_3 ^{\ast }\left(\sum_{1\leq i\leq
j\leq n}x_{i}x_{j}\right)\geq 0\text{.}  \label{h_psd1}
\end{equation}%
this holds if and only if 
\begin{equation}
\gamma_2 ^{\ast }\geq \gamma_3 ^{\ast }\text{ and }\gamma_2 ^{\ast }+(n-1)\gamma_3
^{\ast }\geq 0\text{,}
\end{equation}%
where $\gamma_2 ^{\ast }\geq 0$ but $\gamma_3 ^{\ast }$ can be positive or
negative. Note that $\gamma_2 ^{\ast }\geq 0$ is necessary, and if $\gamma_2 ^{\ast }=0$
then we need $\gamma_3 ^{\ast }=0$ as well. Assume now $\gamma_2 ^{\ast }>0$.
\medskip

\textit{Case 1: $\gamma_3 ^{\ast }\geq 0$.} \newline
In this case, we can write (\ref{h_psd1}) as 
\begin{equation}
\left(1-\frac{\gamma_3 ^{\ast }}{\gamma_2 ^{\ast }}\right)\left(\sum_{1}^{n}x_{i}^{2}\right)+\frac{%
\gamma_3 ^{\ast }}{\gamma_2 ^{\ast }}\left(\sum_{1}^{n}x_{i}\right)^{2}\geq 0\text{,}
\end{equation}%
which holds if and only if $\gamma_2 ^{\ast }\geq \gamma_3 ^{\ast }$.

\textit{Case 2: $\gamma_3 ^{\ast }<0$.}\newline

In this case, (\ref{h_psd1}) is 
\begin{equation}
\left(1-\frac{-\gamma_3 ^{\ast }}{\gamma_2 ^{\ast }}\right)\left(\sum_{1}^{n}x_{i}^{2}\right)\geq \frac{%
-\gamma_3 ^{\ast }}{\gamma_2 ^{\ast }}\left(\sum_{1}^{n}x_{i}\right)^{2}\text{.}
\end{equation}%
If we fix the sum $\sum_{1}^{n}x_{i}$, the minimum of the LHS\ above is
achieved when all $x_{i}$ are equal, so that the inequality holds for all $x$
if and only if it holds for $x$ on the diagonal, i.e., 
\begin{equation*}
1+\frac{-\gamma_3 ^{\ast }}{\gamma_2 ^{\ast }}\geq n\frac{-\gamma_3 ^{\ast }}{\gamma_2
^{\ast }}\Longleftrightarrow \gamma_2 ^{\ast }+(n-1)\gamma_3 ^{\ast }\geq 0\text{.%
}
\end{equation*}
Combining both cases and switching back to $\gamma_2 $ and $\gamma_3 $, we get (\ref{wcce2})--(\ref{wcce3}). 
Combining this with Lemma \ref{n-quad-cce-cons}, we have the desired result.  
\end{proof}

\begin{proof}[Proof of Theorem \ref{main_cce_result}] 
Note that the (\ref{wcce3}) and (\ref{wcce4}) define two half-spaces defined by two parallel lines, this is because they both have same left hand sides. Therefore, there are CCE other than Nash only if there is a value of $\alpha \neq \frac{R-A}{B(n+1)}$ such that the intersection of these two half-spaces is non-empty. However, this is not possible as the maximum of the following concave function, obtained by taking the difference of right hand sides of (\ref{wcce3}) and (\ref{wcce4}),
\begin{equation}
    \frac{1}{B}\left( (R-A)\gamma_1 - \frac{1}{4B}\left((R-A) - B(n-1)\gamma_1 \right)^2 \right) - n\gamma_1^2,\label{rhs_diff}
\end{equation}
is equal to $0$ and is achieved at $\gamma_1 =  \frac{R-A}{B(n+1)}$. Note that only when (\ref{rhs_diff}) is positive the half-spaces overlap more than just meeting at the boundary. This implies the only feasible solution to the system (\ref{wcce3})--(\ref{wcce4}) is obtained when CCE decisions ($\gamma_1$,average peak time quantity) coincides with that of Nash.
\end{proof}

\begin{remark}[Implication of Theorem \ref{main_cce_result}]
By Theorem \ref{main_cce_result}, EV users do not gain anything as against selfish behavior leading to Nash equilibria by committing to a mediating agency. This is in stark contrast to many important economic situations where CCEs are better equilibria than Nash, see  \textcite{moulin2014improving}.  The result in Theorem \ref{main_cce_result} implies that for EV users to commit to a mediating agency (eg., aggregator), heterogeneity in their dissatisfaction rates is necessary. Note that by definition C-EV game is symmetric. While our result does not prove if mediated communication in asymmetric case, when each user has its own dissatisfaction rate and is priced accordingly, can lead to CCEs which are different from Nash, it shows that if it does, the underlying reason is the asymmetry itself. It would be interesting to study CCEs in asymmetric version, given its complexity is out of scope of this paper and hence we will pursue it for future study. 
\end{remark}

\begin{remark}
Interestingly, the Nash equilibrium characterized by (\ref{nash-qty}) shows that in when all users are symmetric the equilibrium is of distributed type. This implies herding is the only possible non-distributed equilibrium in this case. Contrast this with example 1, where the equilibrium is of non-distributed type. This shows that in the asymmetric case there are more equilibria and more importantly, possibly, both distributed and non-distributed.
\end{remark}

%\section{Future research}\label{further_research}

%The cost function $h()$ captures lost-demand cost. \textbf{We will consider a concave function and increasing function for this ..}

%Clearly, this is a congestion game with a specific cost function. For the cost data.....(to be continued!) 

\printbibliography

@article{arifovic2019learning,
  title={Learning correlated equilibria: An evolutionary approach},
  author={Arifovic, Jasmina and Boitnott, Joshua F and Duffy, John},
  journal={Journal of Economic Behavior \& Organization},
  volume={157},
  pages={171--190},
  year={2019},
  publisher={Elsevier}
}

@article{aumann1987correlated,
  title={Correlated equilibrium as an expression of Bayesian rationality},
  author={Aumann, Robert J},
  journal={Econometrica: Journal of the Econometric Society},
  pages={1--18},
  year={1987},
  publisher={JSTOR}
}

@inproceedings{bhaskar2014achieving,
  title={Achieving target equilibria in network routing games without knowing the latency functions},
  author={Bhaskar, Umang and Ligett, Katrina and Schulman, Leonard J and Swamy, Chaitanya},
  booktitle={2014 IEEE 55th Annual Symposium on Foundations of Computer Science},
  pages={31--40},
  year={2014},
  organization={IEEE}
}

@article{chakraborty2017distributed,
  title={Distributed control of flexible demand using proportional allocation mechanism in a smart grid: Game theoretic interaction and price of anarchy},
  author={Chakraborty, Pratyush and Baeyens, Enrique and Khargonekar, Pramod P},
  journal={Sustainable Energy, Grids and Networks},
  volume={12},
  pages={30--39},
  year={2017},
  publisher={Elsevier}
}

@inproceedings{chakraborty2014demand,
  title={A demand response game and its robust price of anarchy},
  author={Chakraborty, Pratyush and Khargonekar, Pramod P},
  booktitle={2014 IEEE International Conference on Smart Grid Communications (SmartGridComm)},
  pages={644--649},
  year={2014},
  organization={IEEE}
}

@article{clement2009impact,
  title={The impact of charging plug-in hybrid electric vehicles on a residential distribution grid},
  author={Clement-Nyns, Kristien and Haesen, Edwin and Driesen, Johan},
  journal={IEEE Transactions on power systems},
  volume={25},
  number={1},
  pages={371--380},
  year={2009},
  publisher={IEEE}
}

@article{dokka2022equilibrium,
  title={Equilibrium design in an n-player quadratic game},
  author={Dokka, Trivikram and Moulin, Herv{\'e} and Ray, Indrajit and SenGupta, Sonali},
  journal={Review of Economic Design},
  pages={1--20},
  year={2022},
  publisher={Springer}
}

@article{jacobsen2022consumers,
  title={How do consumers respond to price complexity? Experimental evidence from the power sector},
  author={Jacobsen, Grant D and Stewart, James I},
  journal={Journal of Environmental Economics and Management},
  volume={116},
  pages={102716},
  year={2022},
  publisher={Elsevier}
}

@article{lin2021minimizing,
  title={Minimizing cost-plus-dissatisfaction in online ev charging under real-time pricing},
  author={Lin, Qiulin and Yi, Hanling and Chen, Minghua},
  journal={IEEE Transactions on Intelligent Transportation Systems},
  year={2021},
  publisher={IEEE}
}

@article{moulin2014improving,
  title={Improving Nash by coarse correlation},
  author={Moulin, Herve and Ray, Indrajit and Gupta, Sonali Sen},
  journal={Journal of Economic Theory},
  volume={150},
  pages={852--865},
  year={2014},
  publisher={Elsevier}
}

@article{moulin1978strategically,
  title={Strategically zero-sum games: the class of games whose completely mixed equilibria cannot be improved upon},
  author={Moulin, Herv{\'e} and Vial, J-P},
  journal={International Journal of Game Theory},
  volume={7},
  number={3},
  pages={201--221},
  year={1978},
  publisher={Springer}
}

@article{muratori2018impact,
  title={Impact of uncoordinated plug-in electric vehicle charging on residential power demand},
  author={Muratori, Matteo},
  journal={Nature Energy},
  volume={3},
  number={3},
  pages={193--201},
  year={2018},
  publisher={Nature Publishing Group}
}

@article{nejad2017online,
  title={Online scheduling and pricing for electric vehicle charging},
  author={Nejad, Mark M and Mashayekhy, Lena and Chinnam, Ratna Babu and Grosu, Daniel},
  journal={IISE Transactions},
  volume={49},
  number={2},
  pages={178--193},
  year={2017},
  publisher={Taylor \& Francis}
}

@article{palensky2011demand,
  title={Demand side management: Demand response, intelligent energy systems, and smart loads},
  author={Palensky, Peter and Dietrich, Dietmar},
  journal={IEEE transactions on industrial informatics},
  volume={7},
  number={3},
  pages={381--388},
  year={2011},
  publisher={IEEE}
}

@article{roughgarden2015intrinsic,
  title={Intrinsic robustness of the price of anarchy},
  author={Roughgarden, Tim},
  journal={Journal of the ACM (JACM)},
  volume={62},
  number={5},
  pages={1--42},
  year={2015},
  publisher={ACM New York, NY, USA}
}

@article{shen2019optimization,
  title={Optimization models for electric vehicle service operations: A literature review},
  author={Shen, Zuo-Jun Max and Feng, Bo and Mao, Chao and Ran, Lun},
  journal={Transportation Research Part B: Methodological},
  volume={128},
  pages={462--477},
  year={2019},
  publisher={Elsevier}
}

@article{tushar2012economics,
  title={Economics of electric vehicle charging: A game theoretic approach},
  author={Tushar, Wayes and Saad, Walid and Poor, H Vincent and Smith, David B},
  journal={IEEE Transactions on Smart Grid},
  volume={3},
  number={4},
  pages={1767--1778},
  year={2012},
  publisher={IEEE}
}

@article{valogianni2020sustainable,
  title={Sustainable electric vehicle charging using adaptive pricing},
  author={Valogianni, Konstantina and Ketter, Wolfgang and Collins, John and Zhdanov, Dmitry},
  journal={Production and Operations Management},
  volume={29},
  number={6},
  pages={1550--1572},
  year={2020},
  publisher={Wiley Online Library}
}

@article{wu2022smart,
  title={Smart charging of electric vehicles: An innovative business model for utility firms},
  author={Wu, Owen Q and Y{\"u}cel, {\c{S}}afak and Zhou, Yangfang},
  journal={Manufacturing \& Service Operations Management},
  volume={24},
  number={5},
  pages={2481--2499},
  year={2022},
  publisher={INFORMS}
}

@online{electricnation,
    author = {Western-Power-Distribution and DriveElectric and EA-Technology},
    title = "Electric Nation",
    url  = "https://electricnation.org.uk/",
    year = "2015",
    addendum = "(accessed: 15.10.2018)"
}

@manual{electricnation2019finalreport,
    author= {Dudek, Esther and Platt, Karen and Storer, Nick and Barnfather, Paul and Duke, Ricky},
    title= {Electric Nation Customer Trial Final Report},
    year={2019},
    publisher= {Western Power Distribution},
    url = "https://www.eatechnology.com/consultancy-software/electric-vehicle-infrastructure/electric-nation/"
}

\end{document}